%%%%%%%%% MASTER -- compiles the 4 sections

\documentclass[journal,onecolumn,draftcls]{IEEEtran}

\usepackage{amssymb}%,bbm}
\usepackage{amsmath,amsthm}
\usepackage{graphicx}
\usepackage{graphics}
\newtheorem{theorem}{Theorem}
\newtheorem{corollary}{Corollary}
\usepackage{setspace}
\doublespacing

%%%%%%%%%%%%%%%%%%%%%%%%%%%%%%%%%%%%%%%%%%%%%%%%%%%%%%%%%%%%%%%%%%%%%%%%%
%\pagestyle{plain}                                                      %%
%%%%%%%%%% EXACT 1in MARGINS %%%%%%%                                   %%
%\setlength{\textwidth}{6.5in}     %%                                   %%
%\setlength{\oddsidemargin}{0in}   %% (It is recommended that you       %%
%\setlength{\evensidemargin}{0in}  %%  not change these parameters,     %%
%\setlength{\textheight}{8.5in}    %%  at the risk of having your       %%
%\setlength{\topmargin}{0in}       %%  proposal dismissed on the basis  %%
%\setlength{\headheight}{0in}      %%  of incorrect formatting!!!)      %%
%\setlength{\headsep}{0in}         %%                                   %%
%\setlength{\footskip}{.5in}       %%
%\setlength{\topmargin}{-.5in}
%
%\setlength{\textheight}{9in}
%\setlength{\oddsidemargin}{-.125in}
%\setlength{\textwidth}{6.75in}                                         %%
%%%%%%%%%%%%%%%%%%%%%%%%%%%%%%%%%%%%%                                   %%
%\newcommand{\required}[1]{\section*{\hfil #1\hfil}}                    %%
%\renewcommand{\refname}{\hfil References Cited\hfil}                   %%
%\bibliographystyle{plain}                                              %%
%%%%%%%%%%%%%%%%%%%%%%%%%%%%%%%%%%%%%%%%%%%%%%%%%%%%%%%%%%%%%%%%%%%%%%%%%

%PUT YOUR MACROS HERE

\usepackage{amsmath}
%\usepackage{amssymb}
%\usepackage{floatflt}

%\markboth{V. Aggarwal}{Reliability of k-out-of-n Data Storage System with Deterministic Parallel and Serial Repair}

%\pagestyle{empty}
%\includeonly{NSFsumm}

\begin{document}
	
	\title{Reliability of k-out-of-n Data Storage System with Deterministic Parallel and Serial Repair}
	
	\author{Vaneet Aggarwal\thanks{V. Aggarwal is with the School of IE, Purdue University, email: vaneet@purdue.edu}}
	\maketitle

\begin{abstract}
	In this paper, we find the Laplace Stieltjes transform of the probability of data loss for the k-out-of-n distributed storage system with deterministic repair times. We consider two repair models, namely the serial and parallel repair. We show that for failure rate much lower than the repair rate, mean time of data loss for the two models is the same unlike the case for exponential repair models.
\end{abstract}
\begin{IEEEkeywords} k-out-of-n storage systems, parallel repair, serial repair, reliability, mean time to data loss
\end{IEEEkeywords}

%\begin{bottomstuff}		
%	Author's address: V. Aggarwal, School of Industrial Engineering, Purdue University, West Lafayette, IN 47906, email: vaneet@purdue.edu.
%\end{bottomstuff}

\section{Introduction}

Data storage demands have been increasing very rapidly, leading to a lot of interest in distributed storage systems \cite{computer,computer2}. Erasure-coded distributed storage systems provide reliable data storage at a a fraction of the cost as compared to the systems based on simple data replication.  Characterizing the reliability of the storage system is an important problem. Reliability has been explored for exponential repair times for serial repair \cite{Chen} and parallel repair \cite{Angus} models. Recently, there has been interest to characterize the reliability for deterministc repair times \cite{VinayRel,VinayRel2}, where geometric interpretations are provided. However, there is no complete characterization of reliability for parallel and serial repair models. The authors of \cite{MoustagaGrep} considers availability of the k-out-of-n systems with general repair times while this paper considers the reliability of k-out-of-n systems.   In this paper, we give the Laplace Stieltjes transform of the probability of data loss for both parallel and serial repair models.

Using the Laplace Stieltjes Transform formulas, mean time to data loss can be found. Even though expressions are difficult to evaluate exactly, we can find the dominant terms for the case when repair rate is much higher than the failure rate and find that the dominant term for the serial repair is the same as in the case of exponential repair in \cite{Chen}. However, we also find that the parallel repair also gives the same dominant term as the serial repair, and thus not achieving the $(n-k)!$ factor improvement in mean time to data loss as given in \cite{Angus} for the parallel exponential repair. The results are compared in simulations, where we see that the mean time to data loss with parallel repair is higher, but the asymptote as the repair rate increases is the same.

The rest of the paper is organized as follows. In Section 2, we give our results for the serial repair model. Section 3 presents our results for the parallel repair model. Section 4 gives the numerical results and Section 5 concludes this paper.

\section{Serial Repair Model}

The serial repair model can be described with the following assumptions. \begin{enumerate}
\item	$n>0$ components (disks) are identical and independent.
\item	The failure rate for each disk is constant, and equals $\lambda$ (failure time is exponentially distributed for each disk).
\item	The repair times are independent and fixed, and equals $t_{rep}$.
\item	The repair for the previous broken components must be re-initialized when a new component fails, and only the newest broken component is repaired.
\item	The system stops working when  $n-k+1$ components have failed for $0<k<n$, $k,n\in {\mathbb{Z}}$.
\end{enumerate}

Suppose that the probability of data loss in time $t$ is $P_c(t)$. Then the Laplace Stieltjes Transform of reliability is given as $P_c(s) = {\mathbb{E}}[e^{-sT}]$, where $T$ is the random variable representing the time to data loss, and $P_c(t) = \Pr(T\le t)$. The next result gives the Laplace Stieltjes Transform of reliability for serial repair.

%\newdef{theorem}{Theorem}
\begin{theorem}
The Laplace Stieltjes Transform of reliability for serial repair is given by
%$M=n\lambda \Pi_{i=1}^{n-k}(1-e^{-(n-i)\lambda t_{rep}})/N$, and
\begin{equation}
\resizebox{\linewidth}{!}{$\displaystyle
P_c(s) = \frac{Q_{0,1}^*(s)Q_{1,2}^*(s)Q_{2,3}^*(s)\cdots Q_{n-k,n-k+1}^*(s)}{\det\left(\begin{bmatrix}
1 & -Q_{0,1}^*(s) & 0 &  \cdots & 0 & 0& 0\\
-Q_{1,0}^*(s) & 1 & -Q_{1,2}^*(s) & \cdots & 0 & 0& 0\\
\cdots\\
\cdots\\
0 & 0 & 0 & \cdots & -Q_{n-k-1,n-k-2}^*(s) & 1 & -Q_{n-k-1,n-k}^*(s)\\
0& 0 & 0 & \cdots & 0 & -Q_{n-k,n-k-1}^*(s) & 1
\end{bmatrix}
\right)}$,}
\end{equation}
where
\begin{equation}
Q_{ij}^*(s) = \begin{cases}\frac{n\lambda}{s+n\lambda} & \text{ if } i=0, j=1 \\
\frac{(n-i)\lambda}{(n-i)\lambda+s}(1-e^{-((n-i)\lambda+s) t_{rep}}) & \text{ if }  j=i+1, 1\le i\le n-k\\
e^{-((n-i)\lambda+s) t_{rep}} & \text{ if }  j=i-1, 1\le i\le n-k\\
0 & \text{ otherwise} \end{cases}
\end{equation}
%  Note that $N$ can be found recursively as a determinant of tridiagonal matrix.
\end{theorem}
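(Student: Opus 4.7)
My plan is to treat the number of currently failed disks as a semi-Markov process on the state space $\{0,1,\ldots,n-k+1\}$, with state $i$ encoding that $i$ disks are broken and state $n-k+1$ absorbing (data loss). I will interpret $Q_{ij}^*(s)$ as the Laplace--Stieltjes transform of the sub-probability measure recording that the next visited state is $j$ and that the holding time in state $i$ is $t$. The three listed formulas then follow by direct computation: from state $0$ only a failure is possible, so the holding time is the minimum of $n$ independent $\mathrm{Exp}(\lambda)$ clocks, giving $Q_{0,1}^*(s)=n\lambda/(s+n\lambda)$; for $1\le i\le n-k$ the holding time is the minimum of the deterministic repair timer $t_{rep}$ and $n-i$ independent $\mathrm{Exp}(\lambda)$ failure clocks, with the next state being $i+1$ if a failure wins and $i-1$ otherwise. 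Integrating $e^{-st}\,(n-i)\lambda e^{-(n-i)\lambda t}$ over $[0,t_{rep}]$ yields $Q_{i,i+1}^*(s)$, while multiplying the no-failure probability $e^{-(n-i)\lambda t_{rep}}$ by the point mass factor $e^{-s t_{rep}}$ yields $Q_{i,i-1}^*(s)$.

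Let $P_c^{(i)}(s)$ denote the LST of the time to data loss started in state $i$, so $P_c(s)=P_c^{(0)}(s)$ and the boundary value is $P_c^{(n-k+1)}(s)=1$. Conditioning on the first transition and invoking the regenerative property at that instant gives
\begin{equation*}
P_c^{(0)}(s)=Q_{0,1}^*(s)\,P_c^{(1)}(s),\qquad P_c^{(i)}(s)=Q_{i,i-1}^*(s)\,P_c^{(i-1)}(s)+Q_{i,i+1}^*(s)\,P_c^{(i+1)}(s),\ \ 1\le i\le n-k.
\end{equation*}
Collecting the unknowns $P_c^{(0)}(s),\ldots,P_c^{(n-k)}(s)$ on the left produces a linear system whose coefficient matrix is exactly the tridiagonal matrix in the denominator of the theorem statement, and whose right-hand side is $(0,\ldots,0,Q_{n-k,n-k+1}^*(s))^{\top}$, with the last entry arising from substituting the boundary value.

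By Cramer's rule, $P_c^{(0)}(s)$ equals the determinant of this coefficient matrix with its first column replaced by the right-hand side, divided by the determinant itself. Since the denominator already matches the theorem, only the numerator determinant remains. Expanding along the replaced first column selects the single nonzero bottom entry $Q_{n-k,n-k+1}^*(s)$ with cofactor sign $(-1)^{n-k}$; the resulting minor has a first row whose only nonzero entry is $-Q_{0,1}^*(s)$ in column $0$, and peeling that row and column off recursively yields a determinant of the same shape but one size smaller. A short induction on $n-k$ shows that the successive peelings telescope into $\prod_{i=0}^{n-k-1}(-Q_{i,i+1}^*(s)) = (-1)^{n-k}\prod_{i=0}^{n-k-1}Q_{i,i+1}^*(s)$; multiplying by the initial $(-1)^{n-k}Q_{n-k,n-k+1}^*(s)$ cancels the two sign factors and produces the claimed numerator $\prod_{i=0}^{n-k}Q_{i,i+1}^*(s)$.

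The only step that requires any care is the sign bookkeeping in the determinant expansion: because the replaced column places its nonzero entry at the bottom rather than the top, the $(-1)^{n-k}$ from the very first cofactor must be matched against the $(-1)^{n-k}$ generated by the successive superdiagonal peelings so that the product of signs is $+1$. Everything else in the argument---the identification of the kernel $Q_{ij}^*$, the first-step conditioning equations, and the reduction to Cramer's rule---is standard Markov-renewal calculus.
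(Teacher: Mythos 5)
Your proposal is correct and follows essentially the same route as the paper: you set up the same semi-Markov kernel $Q_{ij}^*(s)$ and the same first-step (Markov-renewal) linear system for the first-passage LST to the absorbing state $n-k+1$, whose coefficient matrix is the tridiagonal matrix in the denominator. The only difference is that you explicitly carry out the Cramer's-rule expansion showing the numerator determinant collapses to $\prod_{i=0}^{n-k}Q_{i,i+1}^*(s)$ with the signs cancelling, a step the paper simply asserts, and your sign bookkeeping there is correct.
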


%Similarly, the corresponding result for parallel repair is as follows.

%The rest of the section proves Theorem 1.
\begin{proof}
Let $Q_{ij}(t)$ be the one-step transition probability from state $i$ to state $j$ in time $t$, in which one-step implies that there is only one repair or failure during an interval of time $t$. Then, $Q_{ij}(t)$ is given as

\begin{equation}
Q_{ij}(t) = \begin{cases}\int_{x=0}^tn\lambda e^{-n\lambda x} dx & \text{ if } i=0, j=1 \\
\int_{x=0}^{\min(t,t_{rep})}(n-i)\lambda e^{-(n-i)\lambda x} dx & \text{ if }  j=i+1, 1\le i\le n-k\\
\int_{x=0}^{t} e^{-(n-i)\lambda t_{rep}} \delta(x=t_{rep}) dx & \text{ if }  j=i-1, 1\le i\le n-k\\
0 & \text{ otherwise} \end{cases}
\end{equation}

The Laplace-Stieltjes transform is given by
\begin{equation}
Q_{ij}^*(s) = \begin{cases}\frac{n\lambda}{s+n\lambda} & \text{ if } i=0, j=1 \\
\frac{(n-i)\lambda}{(n-i)\lambda+s}(1-e^{-((n-i)\lambda+s) t_{rep}}) & \text{ if }  j=i+1, 1\le i\le n-k\\
e^{-((n-i)\lambda+s) t_{rep}} & \text{ if }  j=i-1, 1\le i\le n-k\\
0 & \text{ otherwise} \end{cases}
\end{equation}
Let $H_{ij}(t)$ be the first recurrence time from state $i$ to state $j$. Then, $H_{ij}(t)$ is given as

\begin{equation}
H_{ij}(t) = Q_{ij}(t) + \sum_{m\ne j} \int_{x=0}^{t} H_{m,j}(t-x)dQ_{i,m}(x)
\end{equation}
The Laplace-Stieltjes transform is then given by
\begin{equation}
H_{ij}^*(s) = Q_{ij}^*(s) + \sum_{m\ne j} H_{m,j}^*(s)Q_{i,m}^*(s)
\end{equation}

The above can be used to write
%\begin{equation}
%H_{0,n-k}^*(s) = \begin{cases}Q_{0,1}^*(s) \text{ if } n-k=1 \\
%Q_{0,1}^*(s)H_{1,n-k}^*(s) \text{ if } n-k>1  \end{cases}
%\end{equation}
%The rest can be written similarly leading us to
\begin{equation}
\resizebox{\linewidth}{!}{$\displaystyle
\begin{bmatrix}
H_{0,n-k+1}^*(s)\\
H_{1,n-k+1}^*(s)\\
H_{2,n-k+1}^*(s)\\
\cdots\\
H_{n-k-2,n-k+1}^*(s)\\
H_{n-k-1,n-k+1}^*(s)\\
H_{n-k,n-k+1}^*(s)
\end{bmatrix}
=
\begin{bmatrix}
0 & Q_{0,1}^*(s) & 0 &  \cdots & 0 & 0& 0\\
Q_{1,0}^*(s) & 0 & Q_{1,2}^*(s) & \cdots & 0 & 0& 0\\
\cdots\\
\cdots\\
0 & 0 & 0 & \cdots & Q_{n-k-1,n-k-2}^*(s) & 0 & Q_{n-k-1,n-k}^*(s)\\
0& 0 & 0 & \cdots & 0 & Q_{n-k,n-k-1}^*(s) & 0
\end{bmatrix}
\begin{bmatrix}
H_{0,n-k+1}^*(s)\\
H_{1,n-k+1}^*(s)\\
H_{2,n-k+1}^*(s)\\
\cdots\\
H_{n-k-2,n-k+1}^*(s)\\
H_{n-k-1,n-k+1}^*(s)\\
H_{n-k,n-k+1}^*(s)
\end{bmatrix}
+ \begin{bmatrix}
0\\0\\0\\ \cdots\\
0\\0\\Q_{n-k,n-k+1}^*(s)\end{bmatrix}$}
\label{mateqn}\end{equation}

Let $P_i(t)$ be the probability that the system is at state $i$ at time $t$ starting from state $0$ at time $t=0$, and let $P_i^*(s)$ be the Laplace-Steiltjes transform. Then, the probability of data loss is given by $P_{n-k+1}(t) = H_{0,n-k+1}(t)$. Thus, it is enough to find $H_{0,n-k+1}(t)$.
From Equation \eqref{mateqn}, we have

\begin{equation}
\resizebox{\linewidth}{!}{$\displaystyle
H_{0,n-k+1}^*(s) = \frac{Q_{0,1}^*(s)Q_{1,2}^*(s)Q_{2,3}^*(s)\cdots Q_{n-k,n-k+1}^*(s)}{\det\left(\begin{bmatrix}
1 & -Q_{0,1}^*(s) & 0 &  \cdots & 0 & 0& 0\\
-Q_{1,0}^*(s) & 1 & -Q_{1,2}^*(s) & \cdots & 0 & 0& 0\\
\cdots\\
\cdots\\
0 & 0 & 0 & \cdots & -Q_{n-k-1,n-k-2}^*(s) & 1 & -Q_{n-k-1,n-k}^*(s)\\
0& 0 & 0 & \cdots & 0 & -Q_{n-k,n-k-1}^*(s) & 1
\end{bmatrix}
\right)}$}
\end{equation}
\end{proof}

Using $P_c(s)$, we can find the mean time to data loss as $-P_c'(0)$. Given the exact expression of $P_c(s)$, we can evaluate $-P_c'(0)$. However, the expression is cumbersome and thus omitted. We consider however the scenario where $\lambda t_{rep}<<1$ and give the mean time to data loss in that scenario.

\begin{corollary}
	The mean time to data loss for serial deterministic repair for $\lambda t_{rep}<<1$ is approximately given as
	\begin{equation}
	{\text {MTDL}}_c \approx \frac{(k-1)!}{n!\lambda}\frac{1}{(\lambda t_{rep})^{n-k}}
	\end{equation}
\end{corollary}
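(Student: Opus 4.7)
The plan is to compute $\mathrm{MTDL}=-P_c'(0)$ by differentiating the Laplace--Stieltjes expression from Theorem~1 and extracting the leading behavior as $\lambda t_{rep}\to 0$. Writing $P_c(s)=N(s)/D(s)$ with $N(s)=\prod_{i=0}^{n-k}Q^*_{i,i+1}(s)$ and $D(s)$ the tridiagonal determinant, I would first use the fact that data loss is almost sure, so $P_c(0)=1$ and $N(0)=D(0)$, which gives the clean formula
\[
\mathrm{MTDL}\;=\;\frac{D'(0)-N'(0)}{D(0)}.
\]

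Next I would estimate the denominator. A short induction on the standard tridiagonal-determinant recurrence collapses $D(0)$ to $\prod_{j=1}^{n-k}(1-e^{-(n-j)\lambda t_{rep}})$, which equals $\frac{(n-1)!}{(k-1)!}(\lambda t_{rep})^{n-k}\bigl(1+O(\lambda t_{rep})\bigr)$; this simultaneously evaluates $N(0)$, consistent with $P_c(0)=1$.

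The key step is to show that $D'(0)-N'(0)=\frac{1}{n\lambda}\bigl(1+o(1)\bigr)$. Taylor-expanding at $s=0$, one has $(Q^*_{0,1})'(0)=-1/(n\lambda)$ of order $1/\lambda$, while $(Q^*_{i,i+1})'(0)=O(\lambda t_{rep}^2)$ and $(Q^*_{i,i-1})'(0)=O(t_{rep})$ for $i\ge 1$. Expanding $D(s)$ along its first row produces the identity $D(s)=D^{(1)}(s)-Q^*_{0,1}(s)Q^*_{1,0}(s)D^{(2)}(s)$, where $D^{(j)}$ denotes the tridiagonal determinant for states $\{j,j+1,\ldots,n-k\}$; a similar collapse gives $D^{(1)}(0),D^{(2)}(0)=1+O(\lambda t_{rep})$ and $(D^{(1)})'(0),(D^{(2)})'(0)=O(\lambda t_{rep}^2)$. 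Consequently the single term $-(Q^*_{0,1})'(0)\,Q^*_{1,0}(0)\,D^{(2)}(0)$ contributes $\frac{1}{n\lambda}$ to $D'(0)$ at leading order, while every other contribution to $D'(0)$ is $O(t_{rep})$ or smaller. Meanwhile $N'(0)\sim -N(0)/(n\lambda)=O\bigl((\lambda t_{rep})^{n-k}/\lambda\bigr)$, which is negligible compared with $1/(n\lambda)$ once $n-k\ge 1$. Dividing by $D(0)$ yields the claimed $\frac{(k-1)!}{n!\,\lambda\,(\lambda t_{rep})^{n-k}}$.

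The careful accounting of the lower-order terms in the cofactor expansion of $D'(0)$ is the only delicate part of the plan. As a sanity check I would re-derive the asymptotic from the underlying semi-Markov chain: state $0$ has mean holding time $1/(n\lambda)$, and a gambler's-ruin calculation on the intermediate states $1,\ldots,n-k$ with forward probabilities $p_i=1-e^{-(n-i)\lambda t_{rep}}$ yields absorption probability $u_1\approx\prod_{i=1}^{n-k}(n-i)\lambda t_{rep}$ starting from state $1$; since the number of returns to $0$ is geometric with mean $1/u_1$ and each excursion adds only $O(t_{rep})$, one obtains $\mathrm{MTDL}\approx 1/(n\lambda\,u_1)$, matching the claim.
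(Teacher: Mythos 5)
Your proposal is correct and follows the same route the paper indicates---computing $-P_c'(0)$ from Theorem~1 and discarding lower-order terms---though the paper omits all details, so your accounting (the collapse $D(0)=N(0)=\prod_{j=1}^{n-k}\bigl(1-e^{-(n-j)\lambda t_{rep}}\bigr)\approx\frac{(n-1)!}{(k-1)!}(\lambda t_{rep})^{n-k}$, the identification of $-(Q^*_{0,1})'(0)Q^*_{1,0}(0)D^{(2)}(0)=\frac{1}{n\lambda}(1+O(\lambda t_{rep}))$ as the only non-negligible contribution to $D'(0)-N'(0)$, and the gambler's-ruin sanity check) supplies exactly the steps the paper leaves out. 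I see no gaps.
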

\begin{proof}
	The proof follows by using expression for $-P_c'(0)$ and ignoring all terms of lower order. The detailed steps are omitted.
\end{proof}

We note that for exponential serial repair model with rate of repair $\mu$, mean time to data loss was characterized in \cite{VaneetOpp}, and is given as

\begin{equation}
{\text {MTDL}}_{Chen} = \sum_{l=0}^{n-k}\frac{1}{(n-l)!}\sum_{i=0}^{n-k-l} \mu^i \lambda^{-(i+1)}(n-l-i-1)!,
\end{equation}
with an approximation given as \cite{Chen}
\begin{equation}
{\text {MTDL}}_{Chen} \approx \frac{(k-1)!}{n!}\frac{\mu^{n-k}}{\lambda^{n-k+1}}.
\end{equation}
We note that the MTDL approximation for serial repair with deterministic repair is the same as the approximation with exponential repair for $t_{rep}= 1/\mu$.

\section{Parallel Repair Model}
The parallel repair model can be described with the following assumptions. \begin{enumerate}
	\item	$n>0$ components (disks) are identical and independent.
	\item	The failure rate for each disk is constant, and equals $\lambda$ (failure time is exponentially distributed for each disk).
	\item	The repair times are independent and fixed, and equals $t_{rep}$.
	\item	The repair for the previous broken components must be re-initialized when a new component fails, and all the  broken components are simultaneously repaired.
	\item	The system stops working when  $n-k+1$ components have failed for $0<k<n$, $k,n\in {\mathbb{Z}}$.
\end{enumerate}

Suppose that the probability of data loss in time $t$ is $P_a(t)$. Then the Laplace Stieltjes Transform of reliability is given as $P_a(s) = {\mathbb{E}}[e^{-sT}]$, where $T$ is the random variable representing the time to data loss, and $P_a(t) = \Pr(T\le t)$. The next result gives the Laplace Stieltjes Transform of reliability for parallel repair.

\begin{theorem}
	The Laplace Stieltjes Transform of reliability for parallel repair is given by
	%$M=n\lambda \Pi_{i=1}^{n-k}(1-e^{-(n-i)\lambda t_{rep}})/N$, and
	\begin{equation}
	\resizebox{\linewidth}{!}{$\displaystyle
		P_a(s) = \frac{Q_{0,1}^*(s)Q_{1,2}^*(s)Q_{2,3}^*(s)\cdots Q_{n-k,n-k+1}^*(s)}{\det\left(\begin{bmatrix}
			1 & -Q_{0,1}^*(s) & 0 &  \cdots & 0 & 0& 0\\
			-Q_{1,0}^*(s) & 1 & -Q_{1,2}^*(s) & \cdots & 0 & 0& 0\\
			\cdots\\
			\cdots\\
			-Q_{n-k-1,0}^*(s) & 0 & 0 & \cdots & 0 & 1 & -Q_{n-k-1,n-k}^*(s)\\
			-Q_{n-k,0}^*(s)& 0 & 0 & \cdots & 0 & 0 & 1
			\end{bmatrix}
			\right)}$,}
	\end{equation}
	where
	\begin{equation}
	Q_{ij}^*(s) = \begin{cases}\frac{n\lambda}{s+n\lambda} & \text{ if } i=0, j=1 \\
	\frac{(n-i)\lambda}{(n-i)\lambda+s}(1-e^{-((n-i)\lambda+s) t_{rep}}) & \text{ if }  j=i+1, 1\le i\le n-k\\
	e^{-((n-i)\lambda+s) t_{rep}} & \text{ if }  j=0, 1\le i\le n-k\\
	0 & \text{ otherwise} \end{cases}
	\end{equation}
	
	The above can be simplified to
	
	\begin{equation}
	P_a(s) = \frac{Q_{0,1}^*(s)Q_{1,2}^*(s)Q_{2,3}^*(s)\cdots Q_{n-k,n-k+1}^*(s)}{1-\sum_{j=1}^{n-k}\left((\Pi_{i=0}^{j-1}Q_{i,i+1}^*(s))Q_{j,0}^*(s)\right)}.
	\end{equation}

	%  Note that $N$ can be found recursively as a determinant of tridiagonal matrix.
\end{theorem}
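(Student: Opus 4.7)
The plan is to run the same renewal argument used in Theorem~1, replacing the tridiagonal serial-repair transition structure by the ``spike'' structure characteristic of parallel repair. First I would write down the one-step transition probabilities $Q_{ij}(t)$ on the state space $\{0,1,\ldots,n-k+1\}$: state $0$ still jumps to state $1$ at rate $n\lambda$, and from any state $i\ge 1$ only two moves are possible --- either an additional disk fails within $t_{rep}$ (advancing to $i+1$), or the parallel repair of all $i$ broken disks completes after exactly $t_{rep}$ without an intervening failure (returning directly to state $0$). The corresponding Laplace--Stieltjes transforms are precisely the $Q_{ij}^*(s)$ stated in the theorem; in particular $Q_{i,0}^*(s)$ is formally identical to the serial-repair $Q_{i,i-1}^*(s)$, because the underlying holding-time distribution is the same --- only the destination label changes.

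Next, introducing the first-recurrence-time functions $H_{ij}(t)$ and taking transforms exactly as in Theorem~1 gives
\begin{equation}
H_{ij}^*(s) = Q_{ij}^*(s) + \sum_{m\ne j} H_{m,j}^*(s)\, Q_{i,m}^*(s),
\end{equation}
and $P_a(t)=H_{0,n-k+1}(t)$. Writing $h_i := H_{i,n-k+1}^*(s)$ and using the fact that from state $i\ge 1$ the only outgoing transitions (other than into the absorbing state $n-k+1$) are to $0$ or $i+1$, this system collapses to
\begin{equation}
h_0 = Q_{0,1}^* h_1, \qquad h_i = Q_{i,0}^* h_0 + Q_{i,i+1}^* h_{i+1}\ \ (1\le i\le n-k-1), \qquad h_{n-k} = Q_{n-k,0}^* h_0 + Q_{n-k,n-k+1}^*.
\end{equation}
In matrix form this is precisely $Mh=b$, where $M$ is the matrix displayed in the theorem statement and $b$ is $Q_{n-k,n-k+1}^*(s)$ times the last unit vector. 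Applying Cramer's rule to extract $h_0$ gives the claimed ratio of determinants; the numerator cofactor is the determinant of a lower-triangular bidiagonal block with diagonal entries $-Q_{i,i+1}^*(s)$ and unit subdiagonal, whose $(-1)^{n-k}$ sign collapses against the $(-1)^{n-k}$ cofactor sign to yield $\prod_{i=0}^{n-k} Q_{i,i+1}^*(s)$.

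For the simplified closed form I would avoid expanding $\det M$ directly and instead back-substitute the recurrence above. Solving for $h_{n-k}, h_{n-k-1},\ldots,h_1$ as affine functions $h_i = A_i h_0 + B_i$ yields
\begin{equation}
B_i = \prod_{j=i}^{n-k} Q_{j,j+1}^*(s), \qquad A_i = \sum_{j=i}^{n-k} Q_{j,0}^*(s)\prod_{\ell=i}^{j-1} Q_{\ell,\ell+1}^*(s),
\end{equation}
and plugging $h_1 = A_1 h_0 + B_1$ into $h_0 = Q_{0,1}^* h_1$ and solving for $h_0$ produces exactly the stated simplified expression; comparison with the Cramer form then furnishes the identity $\det M = 1 - \sum_{j=1}^{n-k}\bigl(\prod_{i=0}^{j-1}Q_{i,i+1}^*\bigr)Q_{j,0}^*$ as a free bonus. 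I expect the main obstacle to be keeping the index bookkeeping straight: the spike first column of $M$ is noticeably more delicate than the tridiagonal serial-repair matrix, both in the cofactor expansion and in matching the back-substituted closed form against the determinant formula.
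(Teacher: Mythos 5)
Your proposal is correct and follows essentially the same route as the paper, which proves this theorem simply by repeating the Markov-renewal/first-passage argument of Theorem~1 with the repair transition redirected from state $i-1$ to state $0$ and then applying Cramer's rule. Your back-substitution $h_i = A_i h_0 + B_i$ is a nice addition, since it explicitly verifies the simplified denominator $1-\sum_{j=1}^{n-k}\bigl(\prod_{i=0}^{j-1}Q_{i,i+1}^*(s)\bigr)Q_{j,0}^*(s)$, a step the paper asserts without derivation.
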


\begin{proof}
	The proof steps are very similar to that for serial repair model, where the difference is that all the failed components get repaired simultaneously thus going from State i to State 0 on repair rather than to State i-1 for the serial repair model.
\end{proof}

Using $P_a(s)$, we can find the mean time to data loss as $-P_a'(0)$. Given the exact expression of $P_a(s)$, we can evaluate $-P_a'(0)$. However, the expression is cumbersome and thus omitted. We consider however the scenario where $\lambda t_{rep}<<1$ and give the mean time to data loss in that scenario.

\begin{corollary}
	The mean time to data loss for parallel deterministic repair for $\lambda t_{rep}<<1$ is approximately given as
	\begin{equation}
	{\text {MTDL}}_a \approx \frac{(k-1)!}{n!\lambda}\frac{1}{(\lambda t_{rep})^{n-k}}
	\end{equation}
\end{corollary}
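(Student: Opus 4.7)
The plan is to write $P_a(s) = N(s)/D(s)$ with $N(s) = \prod_{i=0}^{n-k} Q_{i,i+1}^*(s)$ and $D(s) = 1 - \sum_{j=1}^{n-k} A_j(s)$ where $A_j(s) = \left(\prod_{i=0}^{j-1} Q_{i,i+1}^*(s)\right) Q_{j,0}^*(s)$, and then expand $N$ and $D$ around $s=0$. Since $P_a(0)=1$, one has $N(0)=D(0)$, and $-P_a'(0)=[D'(0)-N'(0)]/D(0)$, so the task reduces to asymptotically evaluating three numbers: $D(0)$, $N'(0)$, and $D'(0)$ in the regime $\lambda t_{rep}\ll 1$.

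First I would verify the identity $D(0)=N(0)$ directly by a telescoping argument on the $A_j(0)$ terms. Setting $p_i=1-e^{-(n-i)\lambda t_{rep}}$ and $q_i=e^{-(n-i)\lambda t_{rep}}$, one has $A_j(0)=p_1p_2\cdots p_{j-1}q_j$, and repeatedly using $q_j=1-p_j$ gives $\sum_j A_j(0)=1-p_1p_2\cdots p_{n-k}=1-N(0)$. The leading-order value of $D(0)=N(0)$ is then read off from $1-e^{-x}\sim x$, yielding $D(0)\approx(\lambda t_{rep})^{n-k}\,(n-1)!/(k-1)!$.

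Next I would compute logarithmic derivatives at $s=0$. The key observation is the hierarchy among the three kinds of factors: $(\ln Q_{0,1}^*)'(0)=-1/(n\lambda)$, which is $\Theta(1/\lambda)$; whereas a short Taylor expansion gives $(\ln Q_{i,i+1}^*)'(0)\approx -t_{rep}/2$ and $(\ln Q_{j,0}^*)'(0)=-t_{rep}$ (using that the leading $-t_{rep}$ terms from $-p_i/\alpha_i$ and $t_{rep}q_i$ cancel in $Q_{i,i+1}^{*\prime}(0)$). In the regime $\lambda t_{rep}\ll 1$, $t_{rep}$ is negligible compared with $1/(n\lambda)$, so every logarithmic derivative of a product containing the factor $Q_{0,1}^*$ is dominated by $-1/(n\lambda)$. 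Consequently $N'(0)\approx -N(0)/(n\lambda)$ and $A_j'(0)\approx -A_j(0)/(n\lambda)$ for each $j$, whence $D'(0)\approx (1-N(0))/(n\lambda)$ by the same telescoping identity.

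Combining everything, $D'(0)-N'(0)\approx (1-N(0))/(n\lambda)+N(0)/(n\lambda)=1/(n\lambda)$, and dividing by $D(0)\approx (\lambda t_{rep})^{n-k}(n-1)!/(k-1)!$ produces the claimed expression $(k-1)!/[n!\lambda(\lambda t_{rep})^{n-k}]$. The main obstacle, and the only step that requires care, is the bookkeeping in Step~3: one has to confirm that the $O(t_{rep})$ contributions from the $n-k$ factors $Q_{i,i+1}^*$ and from $Q_{j,0}^*$ are all asymptotically dwarfed by the single $-1/(n\lambda)$ term coming from $Q_{0,1}^*$, and that no higher-order term in the expansion of $D(0)$ itself beats the leading $(\lambda t_{rep})^{n-k}$ behavior — both of which follow directly from $\lambda t_{rep}\ll 1$ once written out.
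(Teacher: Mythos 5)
Your proposal is correct and follows the same route the paper sketches (compute $-P_a'(0)$ from the simplified ratio $N(s)/D(s)$ of Theorem~2 and discard lower-order terms in $\lambda t_{rep}$); the telescoping identity giving $D(0)=N(0)=\prod_{i=1}^{n-k}\bigl(1-e^{-(n-i)\lambda t_{rep}}\bigr)$, the cancellation making $(\ln Q_{i,i+1}^*)'(0)\approx -t_{rep}/2$, and the dominance of the $-1/(n\lambda)$ term all check out, yielding $\frac{D'(0)-N'(0)}{D(0)}\approx\frac{1/(n\lambda)}{(\lambda t_{rep})^{n-k}(n-1)!/(k-1)!}=\frac{(k-1)!}{n!\lambda(\lambda t_{rep})^{n-k}}$. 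In effect you have supplied the detailed steps the paper explicitly omits.
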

\begin{proof}
	The proof follows by using expression for $-P_a'(0)$ and ignoring all terms of lower order. The detailed steps are omitted.
\end{proof}

We note that the approximation is the same, thus giving the same asymptotic expression as in the case of serial repsir model. For exponential parallel repair model with rate of repair $\mu$, mean time to data loss was characterized in \cite{VaneetOpp}, and is given as

\begin{equation}
{\text {MTDL}}_{Angus} = \sum_{l=0}^{n-k}\frac{1}{(n-l)!}\sum_{i=0}^{n-k-l} \mu^i \lambda^{-(i+1)}(n-l-i-1)!i!,
\end{equation}
with an approximation given as \cite{Chen}
\begin{equation}
{\text {MTDL}}_{Angus} \approx \frac{(k-1)!}{n!}\frac{\mu^{n-k}}{\lambda^{n-k+1}} (n-k)!.
\end{equation}
Thus, even though exponential parallel repair has $(n-k)!$ times higer MTDL than serial repair, the MTDL approximations for the serial and parallel deterministic repair models is the same.

\section{Numerical Results}
\begin{figure}[hbtp]
	\centering
	\includegraphics[trim=.5in 2.5in 1in 3in, clip, width=0.8\textwidth]{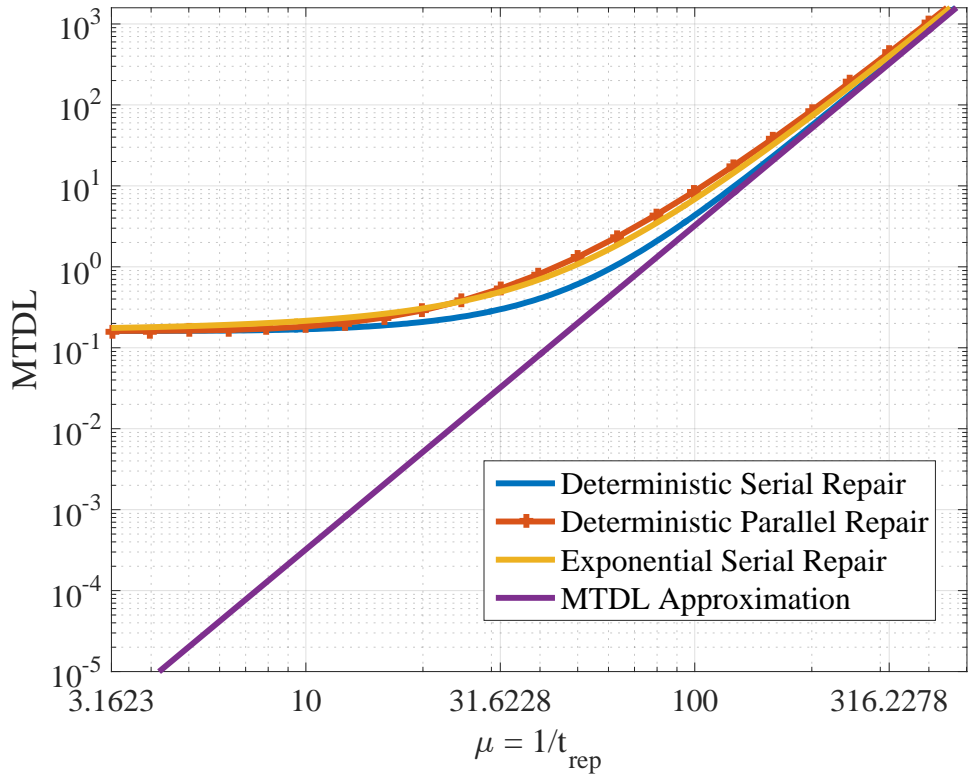}
	\caption{Mean time to data loss for serial and parallel deterministic repair models, with comparison to the exponential serial repair and the approximation result.}
	\label{fig:awesome_image}
\end{figure}
We assume $n=10$ and $k=6$. We assume that the failue rate is $4$ per year. We plot the increase in mean time to data loss with decreasing time to repair in Figure \ref{fig:awesome_image}. We note that for serial repair model, the mean time to data loss is lower for the deterministic repair model as compared to the serial exponential repair model. We further see that parallel deterministic repair has higher mean time to data loss as compared to serial deterministic repair. Finally, we note that the approximation result approximates all these three mean time to data loss expressions as the repair time becomes smaller.

\section{Conclusions}

	This paper gives Laplace Stieltjes transform of the  data loss for the k-out-of-n distributed storage system with deterministic repair times. Two repair models, namely the serial and parallel repair are considered. We show that for failure rate much lower than the repair rate, mean time of data loss for the two models is the same unlike the case for exponential repair models.


\begin{thebibliography}{100}

\bibitem{computer} J. Kaplan, ``Meeting the demand for data storage," The McKinsey Quarterly, Nov 2008.

\bibitem{computer2} L. Mearian, ``World's data will grow by 50X in next decade, IDC study predicts," Computerworld, Jun  2011.
\bibitem{Chen}
P. M. Chen, E. K. Lee, G. A. Gibson, R. H. Katz and D. A. Patterson, ``RAID: high-performance, reliable secondary storage," Journal of the ACM
Volume 26, Issue 2 (1994), p. 145-185.
\bibitem{Angus}
 J. E. Angus, ``On computing MTBF for a k-out-of-n: G repairable system," IEEE Transactions on Reliability Volume 37, Number 3 (1988), p. 312-313.
\bibitem{MoustagaGrep} M. S. Moustafa, ``Time Dependent Availability of K-out-of-N:G Systems with Exponential Failures and General Repairs," Economic Quality Control, vol. 18, no. 1, pp. 59-66, 2003.
\bibitem{VaneetOpp} V. Aggarwal, C. Tian, V. A. Vaishampayan, and Y. R. Chen, ``Distributed Data Storage Systems with Opportunistic Repair," Proc. IEEE Infocom 2014.
\bibitem{VinayRel} A. Campello and V. A. Vaishampayan, ``Reliability of Erasure Coded Systems: A Geometric Approach," Proc. IEEE International Conference on BigData 2013, Santa Clara, CA, Oct. 2013.
\bibitem{VinayRel2}V. A. Vaishampayan and A. Campello, ``Reliability of Erasure Coded Storage Systems: A Combinatorial-Geometric Approach," IEEE Transactions on Information Theory, vol. 61, no. 11, pp. 5795-5809, Nov. 2015.



\end{thebibliography}
\end{document}